\def\be{\begin{equation}}
\def\ee{\end{equation}}
\def\beq{\begin{eqnarray}}
\def\eeq{\end{eqnarray}}
\theoremstyle{definition}
\theoremstyle{theorem}
\newtheorem{theorem}{Theorem}
\theoremstyle{corollary}
\begin{document}
\title{Dispersion relations alone cannot guarantee causality}
\author{L.~Gavassino$^1$, M.~Disconzi$^{1}$, \& J.~Noronha$^2$}
\affiliation{
$^1$Department of Mathematics, Vanderbilt University, Nashville, TN, USA
\\
$^2$Illinois Center for Advanced Studies of the Universe \& Department of Physics,
University of Illinois at Urbana-Champaign, Urbana, IL 61801-3003, USA
}

\begin{abstract}
We show that linear superpositions of plane waves involving a single-valued, covariantly stable dispersion relation $\omega(k)$ always propagate outside the lightcone unless $\omega(k) =a+b k$. This implies that there is no notion of causality for individual dispersion relations since no mathematical condition on the function $\omega(k)$ (such as the front velocity or the asymptotic group velocity conditions) can serve as a sufficient condition for subluminal propagation in dispersive media. Instead, causality can only emerge from a careful cancellation that occurs when one superimposes all the excitation branches of a physical model. This happens automatically in local theories of matter that are covariantly stable. Hence, we find that the need for non-hydrodynamic modes in relativistic fluid mechanics is analogous to the need for anti-particles in relativistic quantum mechanics.
\end{abstract}

\maketitle

\section{Introduction}
The ``practical'' definition of relativistic causality is universally accepted: it is impossible to transmit information faster than the vacuum speed of light \cite{Einstein1908,Tolman_book,landau2}. The question is how to translate this principle into mathematical constraints that our physical theories must obey. In some cases, this question has an unambiguous answer. In classical field theory, causality demands that the characteristics of the field equations lay inside or upon the lightcone \cite{CourantHilbert2_book,Susskind1969,Hawking1973,Wald,Rauch_book}. In quantum field theory, the commutator of spacelike-separated observables must vanish \cite{Peskin_book,Eberhard1988,Keister1996}. 
In other contexts, the mathematical nature of causality is less understood.

Consider a homogeneous system in thermodynamic equilibrium, and let $\omega(k)$ be the eigenfrequency of one of its (linear) excitations, which is a function\footnote{
For anisotropic systems, the dispersion relation is $\omega(k^x,k^y,k^z)$. In this case, we align the $x-$axis along a direction of interest and define $\omega(k):=\omega(k,0,0)$. Then, our results apply to waves propagating in $x$.} of the wavenumber $k$.
Under which conditions is such dispersion relation compatible with causality? 
Most attempted answers revolve around imposing inequalities on the phase velocity $(\mathfrak{Re}\omega)/k$, or on the group velocity $d(\mathfrak{Re}\omega)/dk$ \cite{Brillouin_book,DenicolKodamaMota2008,Koide2011}. However, no fully consistent and universally reliable criterion has been found. The most widely accepted constraint is that the ``front velocity'' \cite{FoxKuper1970,Krotscheck1978}, or the ``asymptotic group velocity'' \cite{Pu2010}, both of which usually coincide by L'Hopital's rule, should not exceed the speed of light $c$ (=1, in our units), namely
\begin{equation}\label{vf}
v_f =\lim_{k \rightarrow \infty, \, k \in \mathbb{R}} \dfrac{\mathfrak{Re} \,\omega}{k} \, \stackrel{\text{L'H}}{=} \! \lim_{k \rightarrow \infty, \, k \in \mathbb{R}} \dfrac{d \, \mathfrak{Re} \,\omega}{dk} \in [-1,1] \, .
\end{equation}
Unfortunately, this condition is far from satisfactory as many famous acausal equations in physics fulfill \eqref{vf} even if the theory of partial differential equations tells us that they propagate information at infinite speeds \cite{CourantHilbert2_book,Rauch_book}. Three notable examples are the diffusion equation, the Euclidean wave equation, and the linearized Benjamin-Bona-Mahony (BBM) \cite{Benjamin1972} equation, respectively:
\begin{equation}\label{exeption}
\begin{split}
(\partial_t - \partial^2_x) \varphi={}& 0 \quad \Longrightarrow \quad  \omega=-ik^2 \, , \\
(\partial_t^2+\partial^2_x) \varphi={}& 0 \quad  \Longrightarrow \quad \omega= \pm i k \, , \\
(\partial_t+\partial_x -\partial_t \partial^2_x)\varphi={}& 0 \quad \Longrightarrow \quad \omega = \dfrac{k}{1+k^2} \, . \\
\end{split}
\end{equation}
All these equations have $v_f=0$. Furthermore, their phase and group velocities are (sub)luminal for all $k$. Nevertheless, these three models are strongly acausal. The BBM equation is particularly striking because one cannot attribute the causality violation to the imaginary part of $\omega$, given that $\omega$ is real for real $k$. Yet, it is acausal as the lines $t=\text{const}$ are spacelike characteristics\footnote{The superluminality of the BBM equation does not contradict \cite{Krotscheck1978}, see the Supplemental Material.}. 

This Letter shows that the limitations of \eqref{vf} are manifestations of a fundamental impossibility. Namely, unless $\omega(k)=a+bk$ for all $k$ (with $a,b$ constant), a single dispersion relation $\omega(k)$ cannot be causal. Rather, ``causality'' is a collective property of the system, which describes how all the excitation branches $\omega_n(k)$ \textit{combine} when the full initial value problem is set up. Therefore, apart from $\omega(k)=a{+}bk$, it is impossible to formulate a sufficient condition for causality in the form of an inequality that $\omega(k)$ should obey. This is why, given a causality criterion like \eqref{vf}, one can always find models that fulfill it and are acausal, such as  \eqref{exeption}. 

Nevertheless, we also show that one can overcome these difficulties by appealing to specific structures present in many (but not all) physical theories, which guarantee that the dispersion branches combine ``correctly'' to ensure causality. In particular, if the operator governing the dynamics is \emph{local} and the system is covariantly \emph{stable} (in precise senses defined below), \emph{all superluminal tails cancel out,} see Theorem \ref{theo1} for a precise statement.

\section{A key inequality}
Our analysis relies on the following inequality, which must hold for all dispersion branches describing disturbances around the equilibrium state of a stable system in relativity \cite{HellerBounds2022,GavassinoBounds2023}:
\begin{equation}\label{In3}
\mathfrak{Im} \, \omega(k) \leq |\mathfrak{Im} \, k| \, .
\end{equation}
This covariant bound can be derived from the study of retarded causal correlators of stable phases of matter, and it is textbook material \cite{Itzykson:1980rh}, whose importance in constraining transport properties of matter was demonstrated in Ref.\ \cite{HellerBounds2022}. Independently from the principle of causality, \eqref{In3} constitutes the physical requirement that a stable system should be simultaneously stable in every inertial frame of reference \cite{GavassinoSuperluminal2021}. In fact, if \eqref{In3} were violated, namely if there were some $k \in \mathbb{C}$ for which $\mathfrak{Im}\, \omega > |\mathfrak{Im} \, k|$, then a boost with velocity $v=\mathfrak{Im} k/\mathfrak{Im} \omega$ would lead us to a new reference frame where $ \mathfrak{Im} \, \omega' >0 $ and $\mathfrak{Im} \, k' =0$ \cite{GavassinoBounds2023}.
This would imply that there is an observer who can detect a growing Fourier mode, signaling an instability \cite{Hiscock_Insatibility_first_order,Kost2000,GavassinoLyapunov_2020}. For this reason, we assume \eqref{In3} holds as a basic stability property of the system.

\section{Single dispersion branches are superluminal} 
Fix some level of description of matter, which may be, e.g., quantum field theory, kinetic theory, or hydrodynamics. Using established techniques \cite{Birmingham2002,KovtunHolography2005,Kovtun2019,Perna2021}, one can compute all the (possibly infinite) dispersion branches predicted by such theory. Choose one of interest, $\omega(k)$. According to conventional wisdom \cite{Brillouin_book,DenicolKodamaMota2008,Koide2011,FoxKuper1970,Krotscheck1978,Pu2010}, the relation $\omega(k)$ determines how the corresponding excitation ``propagates'', and there should be some causality criterion for $\omega(k)$, e.g. \eqref{vf}, which guarantees that the excitation propagates subluminally. Now we prove that this intuitive interpretation can be consistently maintained only in the trivial case $\omega(k) =a+bk$. In dispersive media, causality can never be argued from $\omega(k)$ alone.

First, let us make the above (incorrect) intuition about the causality of $\omega(k)$ more precise.
Let $\varphi(x^\mu)\in \mathbb{C}$ be the linear perturbation to a local observable of interest. For example, $\varphi(x^\mu)$ may be the local energy density fluctuation. Then, consider a 1+1 dimensional profile $\varphi(t,x)$ that is constructed by superimposing plane waves all belonging \textit{solely} to the selected excitation branch $\omega(k)$, i.e.
\begin{equation}\label{gabuboi}
\varphi(t,x)= \int_{-\infty}^{+\infty} \varphi(k) e^{i[kx-\omega(k)t]} \dfrac{dk}{2\pi} \, .
\end{equation}
By setting $t=0$, we find that $\varphi(k)$ is the Fourier transform of the initial data, $\varphi(0,x)$.
The straightforward definition of ``causal dispersion relation'' is the following: If $\varphi(0,x)$ has support in a set $\mathcal{R}$, then the support of $\varphi(t,x)$ at later times should be contained inside the future lightcone of $\mathcal{R}$, see Figure \ref{fig:fig}. As a consequence, if $\varphi(0,x)$ has compact spatial support, one should find that $\varphi(t,x)$ has compact spatial support for each fixed $t>0$. Now we will show that, in practice, this is never the case for $\varphi$ given by \eqref{gabuboi}. On the contrary, single-branched excitations of the form \eqref{gabuboi} always ``travel'' at infinite speed unless $\omega{=}a{+}bk$ (i.e., when the medium is not dispersive).


\begin{figure}
\begin{center}
\includegraphics[width=0.5\textwidth]{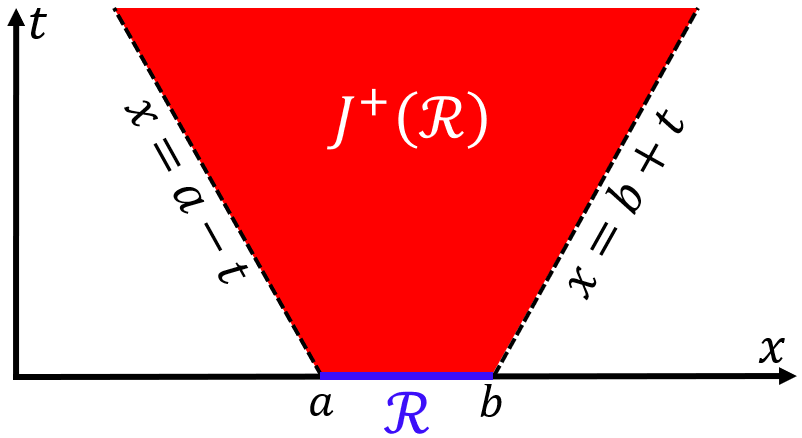}
	\caption{The principle of causality. If a perturbation has initial support inside a region of space $\mathcal{R}$ (blue segment), then it cannot propagate outside the set $J^+(\mathcal{R})$ called the ``causal future of $\mathcal{R}$'' \cite{Hawking1973}, or ``future lightcone of $\mathcal{R}$'' \cite{Susskind1969} (red region).}
	\label{fig:fig}
	\end{center}
\end{figure}

\subsection{A simple argument} If $\varphi(t,x)$ vanishes outside the future lightcone of a compact set $\mathcal{R}$, then also $\partial_t \varphi(t,x)$ must vanish there. Hence, to prove that $\varphi(t,x)$ exits the lightcone, it suffices to show that, for some $t_0>0$, the spatial profiles of $\varphi(t_0,x)$ and $\partial_t \varphi(t_0,x)$ cannot both have compact support simultaneously.
We assume that $\varphi(t,x)$ is smooth, but the argument can be generalized. 

Fix $t_0 >0$. From \eqref{gabuboi} and the uniqueness of the Fourier transform, we have that the spatial Fourier
transform of $\varphi(t_0,x)$
is given by  $\varphi(t_0,k) = \varphi(k) e^{-i\omega(k) t_0}$. Now, suppose that $\varphi(t_0,x)$ is compactly supported. Then, $\varphi(t_0,k)$ extends to an entire function of $k\in \mathbb{C}$ \cite{Hormander_book}. 
Under our assumptions, we can bring time derivatives under the integral to conclude that $\partial_t\varphi(t_0,x)$ has spatial Fourier transform $\dot{\varphi}(t_0,k)=-i\omega(k)\varphi(t_0,k)$. Corollary 1.1 of \cite{HellerBounds2022} tells us that, if $\omega(k)$ obeys \eqref{In3}, then it cannot be an entire function (unless $\omega {=} a{+}bk$). Therefore, $\dot{\varphi}(t_0,k)$ is the product of an entire function with a function that is not entire. Such a function can be entire only in the remote eventuality in which the discrete zeroes of $\varphi(k)$ happen to cancel the singularities of $\omega(k)$. 
This requires a perfect fine-tuning of the initial data, and it does not happen in general\footnote{Note that, if $\varphi(k)$ is the Fourier transform of a compactly supported function $\varphi(0,x)$, then $\varphi(k-a)$ is the Fourier transform of the function $\varphi(0,x)e^{iax}$, which is also compactly supported. Thus, if the zeroes of $\varphi(k)$ overlap the singularities of $\omega(k)$, we can always make a shift $a$ in Fourier space, and construct a new solution for which this overlap no longer happens.}. Furthermore, according to Theorem 2 of \cite{HellerBounds2022}, if $\omega(k)$ satisfies \eqref{In3}, then its singularities are never poles or essential singularities. Instead, they are expected to be branch points, which cannot be erased by multiplying $\omega(k)$ with an entire non-zero function.  Thus, $\dot{\varphi}(t_0,k)$ is not an entire function and, therefore, $\partial_t \varphi(t_0,x)$ cannot have compact support, as desired.

\subsection{Application 1: Hegerfeldt paradox} The above argument is a generalization of the well-known result (due to \citet{Hegerfeldt1974}) that relativistic single-particle wavefunctions of the form 
\begin{equation}\label{relatshrod}
\varphi(t,x)= \int_{-\infty}^{+\infty} \varphi(k) e^{i(kx-\sqrt{m^2{+}k^2} \,t)} \dfrac{dk}{2\pi} 
\end{equation} 
must propagate outside the lightcone \cite{Peskin_book}. Indeed, it can be easily verified that the dispersion relation of the free particle, $\omega=\sqrt{m^2{+}k^2}$, obeys \eqref{In3}, and this forces it to be non-analytical, as testified by the square root. Hence, the support of \eqref{relatshrod} expands at infinite speed \cite{Thaller1992}, even if the group velocity, $v_g(k)=k/\sqrt{m^2{+}k^2}$, is subluminal.

\subsection{Application 2: Necessity of non-hydrodynamic modes} An immediate corollary of our analysis is that the retarded Green's function of any theory for diffusion having only one dispersion relation, $\omega(k) = -iDk^2+\mathcal{O}(k^3)$, always exits the lightcone. Thus, to build a subluminal Green's function, we need at least two dispersion relations (see \cite{Morse1953} \S 7.4). This explains why an additional (usually gapped) mode is needed for causality \cite{HellerBounds2022}.

\subsection{Explanation} The superluminal behavior of \eqref{gabuboi} in causal matter seems absurd, but there is a simple explanation: excitations of the form \eqref{gabuboi} cannot be truly localized, unless $\omega {=}a{+}bk$. They may \textit{seem} to have compact support, if $\varphi(0,x)$ is supported in $\mathcal{R}$, but, in principle, an observer can detect the excitation from outside $\mathcal{R}$ already at $t=0$ by measuring some other observable. 
In fact, we recall that the dispersion relation $\omega(k)$ is derived from some underlying physical theory (e.g. quantum field theory, kinetic theory, or hydrodynamics), which may possess several other local observables besides $\varphi$. The fact that the initial profile $\varphi(0,x)$ has support inside $\mathcal{R}$ does not imply that all the measurable fields affected by the excitation are unperturbed outside $\mathcal{R}$. Instead, it may be the case that, due to this excitation, the perturbation to a second observable $\psi(x^\mu)$ of the theory has already unbounded support at $t=0$. This is how, in principle, $\varphi$ can propagate outside the lightcone without necessarily violating the principle of causality in the full theory: there is no superluminal propagation of information if such information was already accessible through the measurement of $\psi(0,x)$ outside $\mathcal{R}$. Indeed, below we prove that if the perturbations to \textit{all} the observables are initially supported inside a compact region $\mathcal{R}$ (i.e. the excitation is truly localized), and the dynamics is governed by a local operator, then $\varphi(t,x)$ cannot be expressed in the form \eqref{gabuboi}, and it must always combine at least two dispersion branches, unless $\omega=a{+}bk$.

\section{Compactly-supported excitations} We assume that the state of the system at a given time can be characterized, in the linear regime, by a collection of smooth perturbation fields $\Psi(x^\mu)\in \mathbb{C}^{\mathfrak{D}}$, which all vanish at equilibrium.
We assume that $\mathfrak{D}$ is finite, although it can be as large as the number of particles in a material volume element. In most physical theories currently available (e.g., electrodynamics, elasticity theory, or hydrodynamics), the 1+1 dimensional equation of motion of the system takes the form\footnote{Equations involving higher derivatives in time, e.g., $(\partial_t^2{-}\partial^2_x)\varphi{=}0$, can always be reduced to systems that are of first order in time through the introduction of more fields: $\partial_t \varphi = \Pi$, $\partial_t \Pi = \partial^2_x \varphi$.} 
\begin{equation}\label{belllo}
\partial_t \Psi = \mathcal{L}(\partial_x) \Psi \, ,
\end{equation}
where is $\mathcal{L}(\partial_x)$ a polynomial of finite degree in $\partial_x$, i.e. $\mathcal{L}(\partial_x)=A_0 +A_1 \partial_x+...+A_M \partial_x^M$, where $A_j$ are constant $\mathfrak{D}\times \mathfrak{D}$ matrices, and $M \in \mathbb{N}$. This is what we mean by a ``local operator''. In fact, operators involving an infinite series of derivatives can produce non-localities and causality violations. For example, if we set $\mathcal{L}(\partial_x)=e^{a\partial_x}$, equation \eqref{belllo} becomes $\partial_t \Psi(t,x)=\Psi(t,x{+}a)$, which is clearly a non-local theory. Indeed, the main reason why the BBM equation in \eqref{exeption} is acausal is that its dynamical operator, $\mathcal{L}(\partial_x)=(\partial_x^2-1)^{-1}\partial_x$, is non-local\footnote{The equation $i \, \partial_t \varphi= \sqrt{m^2-\partial^2_x}\, \varphi$ is also non-local for the same reason, and this is ultimately what gives rise to the Hegerfeldt paradox in relativistic quantum mechanics.} \cite{Congy2020}.

The general formal solution to \eqref{belllo} reads
\begin{equation}\label{DoFF}
\Psi(t,x) = \int_{-\infty}^{+\infty} e^{\mathcal{L}(ik)t}\Psi(k)e^{ikx} \dfrac{dk}{2\pi} \, ,
\end{equation}
where $\Psi(k)$ is the Fourier transform of the initial data $\Psi(0,x)$. Now, the field $\varphi(t,x)$, being a linearized local observable, is a local linear functional of the degrees of freedom, namely $\varphi=\mathcal{V}(\partial_x)\Psi$, where $\mathcal{V}$ is also a polynomial of finite degree in $\partial_x$, i.e. $\mathcal{V}(\partial_x)=B_0 +B_1 \partial_x+...+B_N \partial_x^N$. Here, $B_j$ are constant row vectors of length $\mathfrak{D}$, and $N \in \mathbb{N}$. Therefore, we have the following formula:
\begin{equation}\label{wood}
\varphi(t,x) = \int_{-\infty}^{+\infty} \mathcal{V}(ik) e^{\mathcal{L}(ik)t}\Psi(k)e^{ikx} \dfrac{dk}{2\pi} \, .
\end{equation} 
Assume that the excitation is initially supported inside $\mathcal{R}$. Then, all the components of $\Psi(0,x)$ are compactly supported, and all the components of $\Psi(k)$ are entire functions. Furthermore, $\mathcal{L}(ik)$ and $\mathcal{V}(ik)$ are entire in $k\in \mathbb{C}$, being polynomials. Also, the matrix exponential is an analytic function of the components of the matrix in the exponent, so that $e^{\mathcal{L}(ik)t}$ is entire in $k$. Combining these results, we can conclude that the integrand of \eqref{wood} is an entire function of $k$ for all $t$. For this reason, it cannot coincide with \eqref{gabuboi}, unless $\omega(k) =a+bk$ (i.e. in dispersion-free systems). This shows that, when we construct the state \eqref{gabuboi} in a dispersive medium, we implicitly allow some component of $\Psi$ to have unbounded support already at $t=0$, which is what we wanted to prove. 
In the Supplementary Material, we analyze the explicit example of the Klein-Gordon equation.

\section{Causality criterion for stable matter} The above analysis suggests that if the dynamics of the system is governed by a local operator $\mathcal{L}$, then all the dispersion branches will automatically combine in a way to cancel the infinite tails of the individual excitations \eqref{gabuboi}. This intuition can be made rigorous through the following theorem, according to which, schematically, 
$$
\binom{\text{Local}}{\text{equations}} + \binom{\text{Stability in}}{\text{all frames}} \quad \Longrightarrow \quad \binom{\text{Relativistic}}{\text{causality}} \, .
$$
More precisely:
\begin{theorem}\label{theo1}
If $\mathcal{L}(ik)$ and $\mathcal{V}(ik)$ are polynomials of finite degree in $ik$, and the eigenvalues $\omega_n(k)$ of $i\mathcal{L}(ik)$ obey the stability requirement \eqref{In3} for all $k \in \mathbb{C}$, then all smooth linear excitations propagate subluminally, in the sense that the support of $\varphi(t,x)$, as given by \eqref{wood}, is contained within the future lightcone of the support of $\Psi(0,x)$.
\end{theorem}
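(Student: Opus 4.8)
\emph{Proof strategy.} The plan is to convert the geometric causality statement into a growth estimate on the analytically continued spatial Fourier transform
\[
\hat\varphi(t,k)=\mathcal{V}(ik)\,e^{\mathcal{L}(ik)t}\,\Psi(k),
\]
and then to read off the support of $\varphi(t,\cdot)$ from the Paley--Wiener--Schwartz theorem \cite{Hormander_book}. Suppose the initial datum $\Psi(0,\cdot)$ is supported in a compact interval $\mathcal{R}=[r_1,r_2]$; its causal future at a fixed time $t\ge0$ is $[r_1-t,\,r_2+t]$, whose support function is $h(\mathfrak{Im}\,k)+t\,|\mathfrak{Im}\,k|$, with $h(\eta)=\max(r_1\eta,r_2\eta)$. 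Since $\hat\varphi(t,\cdot)$ has already been shown to be entire (see the discussion below \eqref{wood}), it will suffice to establish the bound
\[
|\hat\varphi(t,k)|\le C_t\,(1+|k|)^{P}\,e^{\,h(\mathfrak{Im}\,k)}\,e^{\,t\,|\mathfrak{Im}\,k|}
\]
for some integer $P$ and some $k$-independent constant $C_t$: by Paley--Wiener--Schwartz this is precisely equivalent to $\mathrm{supp}\,\varphi(t,\cdot)\subseteq[r_1-t,r_2+t]=J^+(\mathcal{R})$.

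I would then split the integrand into three factors. The polynomial $\mathcal{V}(ik)$ contributes only algebraic growth $\|\mathcal{V}(ik)\|=O(|k|^{N})$, and each (smooth, compactly supported) component of $\Psi(k)$ obeys $\|\Psi(k)\|\le C\,(1+|k|)^{N_0}\,e^{h(\mathfrak{Im}\,k)}$, again by Paley--Wiener--Schwartz. The entire content of the theorem therefore reduces to controlling the matrix exponential, i.e.\ to proving
\[
\|e^{\mathcal{L}(ik)t}\|\le C_t\,(1+|k|)^{M(\mathfrak{D}-1)}\,e^{\,t\,|\mathfrak{Im}\,k|}.
\]
The stability hypothesis \eqref{In3} enters exactly here: the eigenvalues of $\mathcal{L}(ik)$ are $-i\omega_n(k)$, with real parts $\mathfrak{Im}\,\omega_n(k)\le|\mathfrak{Im}\,k|$, so the spectral abscissa of $\mathcal{L}(ik)$ never exceeds $|\mathfrak{Im}\,k|$.

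\emph{The main obstacle} is that the spectral abscissa alone does \emph{not} control $\|e^{\mathcal{L}(ik)t}\|$, because $\mathcal{L}(ik)$ is generically non-normal and its departure from normality grows with $|k|$; the naive estimate $\|e^{\mathcal{L}(ik)t}\|\le e^{\|\mathcal{L}(ik)\|t}$ is useless, since $\|\mathcal{L}(ik)\|=O(|k|^M)$ would produce a catastrophic $e^{t|k|^M}$. To tame this I would Schur-decompose $\mathcal{L}(ik)=Q(D+N)Q^{*}$, with $Q$ unitary, $D=\mathrm{diag}(-i\omega_n(k))$, and $N$ strictly upper triangular satisfying $\|N\|\le\|\mathcal{L}(ik)\|=O(|k|^{M})$. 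Conjugating by the diagonal scaling $S=\mathrm{diag}(1,s,\dots,s^{\mathfrak{D}-1})$ with $0<s\le 1$ shrinks the nilpotent part, $\|S^{-1}NS\|\le s\,\|N\|$, at the cost of a condition number $\kappa(S)=s^{-(\mathfrak{D}-1)}$. The logarithmic-norm inequality $\|e^{(D+S^{-1}NS)t}\|\le e^{(|\mathfrak{Im}\,k|+s\|N\|)t}$ then yields $\|e^{\mathcal{L}(ik)t}\|\le s^{-(\mathfrak{D}-1)}e^{(|\mathfrak{Im}\,k|+s\,O(|k|^M))t}$, and the choice $s\sim(1+|k|)^{-M}/t$ makes the correction to the exponent $O(1)$ while leaving only the advertised polynomial prefactor. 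This is the crux; the rest is bookkeeping.

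Finally I would multiply the three factors to obtain $|\hat\varphi(t,k)|\le C_t\,(1+|k|)^{P}\,e^{h(\mathfrak{Im}\,k)+t|\mathfrak{Im}\,k|}$ with $P=N+N_0+M(\mathfrak{D}-1)$, which is the sought-after Paley--Wiener--Schwartz estimate; smoothness of $\Psi(0,\cdot)$ upgrades the algebraic factors to arbitrary negative powers on the real axis, so that $\varphi(t,\cdot)$ is in fact a smooth function supported in $J^+(\mathcal{R})$, completing the argument.
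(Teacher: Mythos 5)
Your proposal is correct, and it takes a genuinely different route from the paper. The paper also first establishes that the integrand of \eqref{wood} is entire, but then argues by contour deformation: it closes the real-axis integral with a large semicircle in the upper half-plane (for $x>t\geq 0$ and data supported in $[-1,0]$), decomposes $\mathcal{L}(ik)$ via Jordan--Chevalley into eigenprojectors $\mathcal{P}_n$ plus a nilpotent part $\mathcal{N}$, invokes Kato's perturbation theory to show that $\mathcal{P}_n$ and $\mathcal{N}^j$ grow at most polynomially in $|k|$, and uses \eqref{In3} to make each term $e^{i(kx-\omega_n t)}$ decay on the semicircle. You instead stay on the Fourier side and prove the Paley--Wiener--Schwartz growth estimate $|\hat\varphi(t,k)|\le C_t(1+|k|)^P e^{h(\mathfrak{Im}k)+t|\mathfrak{Im}k|}$ directly, with the non-normality of $\mathcal{L}(ik)$ tamed by the Schur form plus the diagonal-scaling trick rather than by eigenprojectors. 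The two arguments use \eqref{In3} in exactly the same place (to cap the real part of the spectrum of $\mathcal{L}(ik)$ by $|\mathfrak{Im}k|$), but your version buys a few things: it yields an explicit quantitative bound on $\|e^{\mathcal{L}(ik)t}\|$ with the prefactor $t^{\mathfrak{D}-1}(1+|k|)^{M(\mathfrak{D}-1)}$ made visible; it treats both sides of the lightcone and a general compact $\mathcal{R}$ in one stroke rather than reducing to one edge by translation and linearity; and it sidesteps the eigenprojector decomposition, whose polynomial-growth justification via Kato is the most delicate step of the paper's proof (the projectors can degenerate at exceptional points, and one must argue these are absent for large $|k|$). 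The paper's contour argument is, in exchange, perhaps more physically transparent. Two small points to tighten in a full write-up: state the logarithmic-norm inequality as $\|e^{At}\|\le e^{\mu_2(A)t}$ with $\mu_2(D+E)\le\max_n\mathfrak{Re}D_{nn}+\|E\|$, and track the dimension-dependent constant in $\|S^{-1}NS\|\le C_{\mathfrak{D}}\,s\,\|N\|$; neither affects the conclusion.
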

\begin{figure}
\begin{center}
\includegraphics[width=0.49\textwidth]{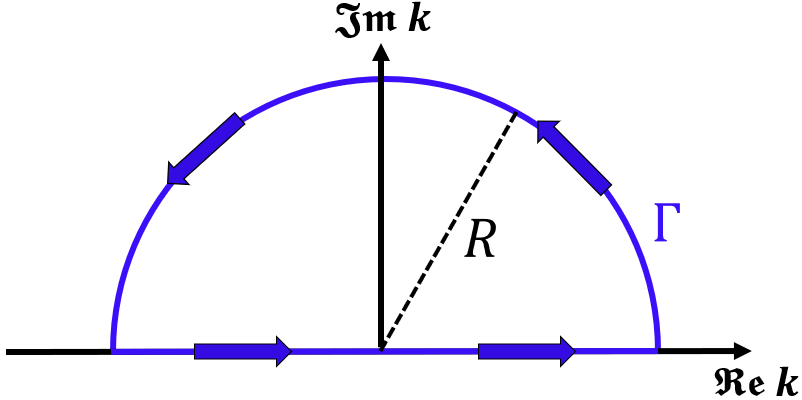}
	\caption{Path of integration 
for the proof of Theorem \ref{theo1}.}
	\label{fig:fig2}
	\end{center}
\end{figure}
\begin{proof}
We will focus on the case where $\Psi(0,x)$ has support inside the interval $[-1,0]$. We will verify that $\varphi(t,x)$ vanishes for $x>t \geq 0$. More general cases can be recovered from here by invoking linearity, translation invariance, and closure of the solution space.

Consider the complex integral
\begin{equation}\label{III}
\mathcal{I}=\int_\Gamma \mathcal{V}(ik) e^{\mathcal{L}(ik)t}\Psi(k)e^{ikx} \dfrac{dk}{2\pi} \, ,
\end{equation}
where $\Gamma$ is the closed loop in complex $k$ space in Fig.\ \ref{fig:fig2}, in the limit of large $R$. Since the integrand is entire, $\mathcal{I}=0$. Let us now show that, if $x>t \geq 0$, then the contribution coming from the upper semicircle decays to zero as $R\rightarrow +\infty$, so that $0=\mathcal{I}=\varphi(t,x)$, see equation \eqref{wood}. To this end, we first note that, according to the Jordan-Chevalley decomposition theorem, the matrix $\mathcal{L}(ik)$ can be expressed as 
\begin{equation}
\mathcal{L}(ik)=-i\sum_n \omega_n(k)\mathcal{P}_n(k)+\mathcal{N}(k) \, ,
\end{equation}
where $\mathcal{P}_n$ are complementary eigenprojectors (so that $\mathcal{P}_m\mathcal{P}_n=\delta_{mn}\mathcal{P}_n$, $\sum_n \mathcal{P}_n=\mathbb{I}$), and $\mathcal{N}$ is a nilpotent matrix ($\mathcal{N}^{a}{=}0$ for some $a \in \mathbb{N}$) which commutes with all $\mathcal{P}_n$. Thus, the integrand in \eqref{III} can be rewritten as
\begin{equation}\label{brutale}
\sum_n\sum_{j=0}^{a-1}\mathcal{V}  \dfrac{(\mathcal{N} t)^j}{j!} \mathcal{P}_n e^{i(kx-\omega_n t)} \Psi  \, .
\end{equation}
The matrix elements of $\mathcal{N}^j$ and $\mathcal{P}_n$ grow at most like powers of $|k|$. 
This follows from \cite[Chapter 2, eqs. (1.21) and (1.26)]{Kato_book}, applied to the matrix $(ik)^{-M} \mathcal{L}(ik)$ regarded as a polynomial in $(ik)^{-1} \! {\rightarrow} \, 0$, combined with the fact that $(ik)^{-M} \mathcal{L}(ik)$ and $\mathcal{L}(ik)$ have the same invariant subspaces. 
On the other hand, if $\mathfrak{Im} k \geq 0$, and $x>t \geq 0$, we have the following estimates:
\begin{equation}\label{tweve}
\begin{split}
& \big|e^{i(kx-\omega_nt)}\big|= e^{-x \, \mathfrak{Im}k+t \, \mathfrak{Im}\omega_n} \leq e^{-(x-t)\mathfrak{Im}k} \leq 1  \, , \\
& |\Psi(k)|=\bigg|\int_{-1}^0 e^{\tilde{x} \mathfrak{Im} k}\Psi(0,\tilde{x})e^{-i\tilde{x} \, \mathfrak{Re} k}d\tilde{x} \bigg| \leq L^1[\Psi(0,x)] \, . \\
\end{split}
\end{equation}
In the first line, we have invoked the inequality \eqref{In3}. In the second line, we have used the fact that $e^{\tilde{x} \mathfrak{Im} k} \leq 1$ inside the interval $[-1,0]$. Note that $\Psi(0,x)$, being continuous and compactly supported, has finite $L^1$ norm. From the estimates \eqref{tweve}, we can conclude that \eqref{brutale} decays exponentially to zero when $\mathfrak{Im} k \rightarrow +\infty$. Furthermore, since $\Psi(\mathfrak{Re} k +i \mathfrak{Im} k)$, regarded as a function of $\mathfrak{Re}k$, is the Fourier transform of the  Schwartz function $e^{x \mathfrak{Im} k}\Psi(0,x)$, it is itself a Schwartz function \cite{Hormander_book}, meaning that \eqref{brutale} decays to zero faster than any power also when $\mathfrak{Re}k \rightarrow \infty$. It follows that, as $R^2= (\mathfrak{Re}k)^2+(\mathfrak{Im}k)^2 \rightarrow +\infty$, the integral over the semicircle converges to zero since the integrand decays faster than any power of $R$.  
\end{proof}

Most derivations of \eqref{vf} rely on the assumption that $\omega \approx v_f k$ for large $k \in \mathbb{C}$, so that \eqref{vf} is a direct consequence of \eqref{In3}. However, \eqref{In3} is a much more stringent condition, as it automatically rules out the acausal equations \eqref{exeption}. Indeed, the apparent success of \eqref{vf} in many situations can  be traced back to \eqref{In3} through the following theorem proven below:
\begin{theorem}\label{theo2}
If \eqref{belllo} is a hyperbolic first-order system, with $\mathcal{L}(\partial_x)=-\Xi-\mathcal{M}\partial_x$, then \eqref{In3} implies \eqref{vf}, and the characteristic velocities coincide with the front velocities.
\end{theorem}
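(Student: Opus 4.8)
The plan is to reduce the whole statement to the large-$k$ asymptotics of the eigenvalues $\omega_n(k)$ and to show that these are dictated by the principal symbol $\mathcal{M}$. With $\mathcal{L}(\partial_x)=-\Xi-\mathcal{M}\partial_x$, one computes $i\mathcal{L}(ik)=k\mathcal{M}-i\Xi$, so the branches $\omega_n(k)$ are precisely the eigenvalues of the matrix $k\mathcal{M}-i\Xi$. Hyperbolicity of the first-order system \eqref{belllo} means exactly that $\mathcal{M}$ is diagonalizable with real eigenvalues $\lambda_1,\dots,\lambda_{\mathfrak{D}}$, and these are by definition the characteristic velocities. First I would diagonalize, writing $\mathcal{M}=S\Lambda S^{-1}$ with $\Lambda=\mathrm{diag}(\lambda_n)$, so that the $\omega_n(k)$ are the eigenvalues of $S\bigl(k\Lambda-i\tilde{\Xi}\bigr)S^{-1}$ with $\tilde{\Xi}=S^{-1}\Xi S$, hence of $k\bigl(\Lambda-(i/k)\tilde{\Xi}\bigr)$.

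Next I would establish the leading asymptotics. Since $\|(i/k)\tilde{\Xi}\|=\|\tilde{\Xi}\|/|k|\to 0$ as $|k|\to\infty$, uniformly in the argument of $k$, continuity of the spectrum forces $\omega_n(k)/k\to\lambda_n$ uniformly in $\arg k$, i.e.\ $\omega_n(k)=\lambda_n k+o(|k|)$. Evaluating this along the real axis immediately yields $v_f=\lim_{k\to\infty,\,k\in\mathbb{R}}\mathfrak{Re}\,\omega_n/k=\lambda_n$, which is exactly the asserted coincidence of front velocities and characteristic velocities, branch by branch.

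Finally I would feed the stability bound \eqref{In3} into the same asymptotics, now evaluated along complex rays $k=|k|e^{i\theta}$. Writing $\mathfrak{Im}\,\omega_n(k)=\lambda_n|k|\sin\theta+o(|k|)$ and $|\mathfrak{Im}\,k|=|k|\,|\sin\theta|$, inequality \eqref{In3} becomes, after dividing by $|k|$ and letting $|k|\to\infty$, the condition $\lambda_n\sin\theta\le|\sin\theta|$ for every $\theta$. Choosing $\theta$ with $\sin\theta>0$ gives $\lambda_n\le 1$, while $\sin\theta<0$ gives $\lambda_n\ge-1$; hence $|\lambda_n|\le 1$. Combined with $v_f=\lambda_n$, this is precisely \eqref{vf}.

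The main obstacle I anticipate is making the asymptotic statement $\omega_n(k)=\lambda_n k+o(|k|)$ fully rigorous and, crucially, uniform in the complex direction $\theta$, since a careless application of eigenvalue perturbation theory can degrade at a degenerate eigenvalue of $\mathcal{M}$. The diagonalizability guaranteed by hyperbolicity is what rescues the argument: it lets me bound the perturbation $\tilde{\Xi}/|k|$ in norm independently of $\theta$ and invoke continuity of the spectrum for the set $\{\omega_n(k)/k\}$, so that no branch can escape the interval $[-1,1]$ in the limit, regardless of how the branches are labelled or how they cross.
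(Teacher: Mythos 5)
Your proposal is correct and follows essentially the same route as the paper: both identify $\omega_n(k)/k$ as the eigenvalues of $\mathcal{M}+(ik)^{-1}\Xi$, invoke continuity of the spectrum as $(ik)^{-1}\to 0$ to get $\omega_n/k\to\lambda_n$ with $\lambda_n$ the (real, by hyperbolicity) characteristic velocities, and then read off the front-velocity identity along the real axis and the bound $|\lambda_n|\le 1$ from \eqref{In3} along non-real directions (the paper uses the imaginary axis where you use general rays $\theta$, which is an immaterial difference). Your explicit diagonalization of $\mathcal{M}$ is a harmless extra step the paper avoids by citing Kato's continuity property directly, and your closing remark that only set-level convergence of the spectrum is needed correctly disposes of the branch-labelling worry.
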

\begin{proof}
The ratios $\omega_n/k$ are eigenvalues of the matrix
\begin{equation}
\dfrac{i\mathcal{L}(ik)}{k} = \mathcal{M} + (ik)^{-1} \Xi \, .
\end{equation}
If we regard the right-hand side as a polynomial in $(ik)^{-1}$, we can take the limit as $(ik)^{-1}\rightarrow 0$ and apply the continuity property of eigenvalues 
\cite{Kato_book} to conclude that $\omega_n/k$ must converge to eigenvalues of $\mathcal{M}$ for large $k \in \mathbb{C}$. But the eigenvalues of $\mathcal{M}$ are the characteristic velocities of the system, and they are real (by hyperbolicity), so that
\begin{equation}
\lim_{k \rightarrow \infty, \, k \in \mathbb{C} } \dfrac{\omega_n}{k} = v_{\text{ch},n}\in \mathbb{R} \, .
\end{equation}
Restricting the above limit to real $k$, and using the continuity of $\mathfrak{Re}$, we find that the characteristic velocities coincide with the front velocities. Restricting the limit to imaginary $k$, we find that \eqref{In3} implies causality. 
\begin{equation}
\begin{split}
& v_{\text{ch},n} = \mathfrak{Re} \bigg[ \lim_{k \rightarrow \infty, \, k \in \mathbb{R} } \dfrac{\omega_n}{k} \bigg] = \! \! \lim_{k \rightarrow \infty, \, k \in \mathbb{R} } \dfrac{\mathfrak{Re}\omega_n}{k} =v_{f,n} \, , \\
&  v_{\text{ch},n} = \mathfrak{Re} \bigg[ \lim_{k \rightarrow \infty, \, k \in i\mathbb{R} } \dfrac{\omega_n}{i \mathfrak{Im}k} \bigg] = \! \! \lim_{k \rightarrow \infty, \, k \in i\mathbb{R} } \dfrac{\mathfrak{Im}\omega_n}{\mathfrak{Im}k} {\in} [-1,1] \, . \\
  \end{split}
\end{equation} 
This completes our proof.
\end{proof}

While the above analysis was restricted to classical initial value problems, its broad implications may also be extrapolated to quantum systems. For example, some conformal field theories are known to be acausal \cite{Brigante:2008gz}. Given that such theories are local, we can ``apply'' our Theorem 1 to conclude that such theories are not covariantly stable and violate the bound \eqref{In3}, in agreement with Section III.A of \cite{GavassinoSuperluminal2021}.

\section{Correlators in QFT} Causality requires multiple dispersion relations also in QFT. Given a local observable operator $\hat{\varphi}(x^\mu)$, the correlator $G(x^\mu)=\braket{[\hat{\varphi}(x^\mu),\hat{\varphi}(0)]}$  has support inside the lightcone \cite{Peskin_book}. But since the slices of the lightcone at constant time are compact spheres, the spatial Fourier transform $G(t,\textbf{k})$ must be entire in $\textbf{k}$ for all $t$
\cite{Hormander_book}. This is why introducing momentum cutoffs or ``patching'' correlators in momentum space leads to causality violations \cite{Henning_1995ft}: it breaks analyticity. Furthermore, if $G(t,\textbf{k})$ can be expressed as a superposition of modes of the form $e^{-i\omega_n(\textbf{k})t}$ (see e.g. \cite{Lowdon_2022yct}), then we know that all the non-analyticities of the individual frequencies $\omega_n(\textbf{k})$ must cancel out.

\section{Final remarks}Consider the following puzzle: All solutions of the relativistic Schr\"{o}dinger equation  $i\partial_t \varphi {=} \sqrt{m^2{-}\partial^2_x}\, \varphi$
are also solutions of the Klein-Gordon equation $-\partial^2_t\varphi {=} (m^2{-}\partial^2_x)\varphi$. Nevertheless, the former is notoriously acausal \cite{Peskin_book}, while the latter is causal. This defies the intuition of causality as a statement about the propagation speed of $\varphi$. How can the same function $\varphi(t,x)$ be superluminal when viewed as a solution of one equation and subluminal when viewed as a solution of another equation?

Here, we solved this puzzle by showing that causality is not an intrinsic property of the fields themselves. Rather, it is a property of how we ``attach information'' to the fields by defining the physical state. The existence of faster-than-light motion does not result in causality violation if the motion carries no new information about the state. Indeed, relativistic Schr\"odinger and Klein-Gordon differ by the way they define the physical state at a given time: $\{\varphi(x)\}$ in the former, and $\{\varphi(x),\partial_t \varphi(x)\}$ in the latter. The puzzle arises because compactly supported field states within relativistic Schr\"odinger (i.e., localized $\varphi$ profiles) must have unbounded support within Klein-Gordon (i.e., cannot be localized in $\partial_t \varphi$), see Supplementary Material.

Starting from this intuition, we showed that non-hydrodynamic modes become necessary for relativistic viscous hydrodynamics for the same reason that antiparticles are necessary for relativistic quantum mechanics: defining a notion of locality in dispersive systems requires at least two dispersion relations.

\textit{Note Added. -} Recently, other formulations of Theorems 1 and 2 were proposed \cite{WangPu2023,HoultKovtunCausality2023}.

\section*{Acknowledgements}
L. G. and J.N. would like to thank P. Kovtun and R. Hoult for a stimulating discussion. We also thank P. Lowdon for bringing the issue of correlators in QFT to our attention. L. G. is partially supported by a Vanderbilt Seeding Success Grant. M.M.D. is partially supported by NSF Grant No. DMS-2107701, a Chancellor’s Faculty Fellowship, DOE Grant No. DE-SC0024711, and a Vanderbilt Seeding Success Grant. J. N. is partially supported by the U.S. Department of Energy, Office of Science, Office for Nuclear Physics under Award No. DE-SC0023861. L.G. and J.N. would like to thank P. Kovtun and R. Hoult for a stimulating discussion. We also thank P. Lowdon for bringing the issue of correlators in QFT to our attention. The authors thank KITP Santa Barbara for its hospitality during “The Many Faces of Relativistic Fluid Dynamics” Program. This research was supported in part by the National Science Foundation under Grant No. NSF PHY-1748958.

\bibliography{Biblio}

\newpage

\onecolumngrid
\newpage
\begin{center}
  \textbf{\large Dispersion relations alone cannot guarantee causality\\Supplementary Material}\\[.2cm]
  L. Gavassino,$^{1}$ M. Disconzi,$^{1}$ and J. Noronha$^2$\\[.1cm]
  {\itshape ${}^1$Department of Mathematics, Vanderbilt University, Nashville, TN, USA\\
  ${}^2$Illinois Center for Advanced Studies of the Universe \& Department of Physics,\\University of Illinois at Urbana-Champaign, Urbana, IL 61801-3003, USA\\}
(Dated: \today)\\[1cm]
\end{center}

\setcounter{equation}{0}
\setcounter{figure}{0}
\setcounter{table}{0}
\setcounter{page}{1}
\renewcommand{\theequation}{S\arabic{equation}}
\renewcommand{\thefigure}{S\arabic{figure}}

\section{The BBM equation and front velocity theorems}

The superluminal behavior of the BBM dispersion relation
\begin{equation}
\omega= \dfrac{k}{1+k^2} 
\end{equation}
seems to contradict Proposition 2 of \cite{Krotscheck1978}, which gives sufficient conditions for causality if the front velocity $v_f$ is subluminal. However, there is no contradiction, since the BBM equation breaks assumption E, namely that the initial value problem for the excitations should be uniquely solvable. BBM \cite{Benjamin1972} have proved that the solutions to the differential equation $(\partial_t + \partial_x - \partial_t \partial^2_x)\varphi=0$ are unique, but this is true only within a space of functions which decay at spacelike infinity, and not for generic smooth functions. Indeed, since the lines $t=\text{const}$ are characteristics of the BBM equation, we are actually dealing with a characteristic initial value problem (not a proper Cauchy problem), and the Holmgren uniqueness theorem no longer applies \cite{Rauch_book}. This allows for the formation of non-zero solutions with vanishing initial profile $\varphi(0,x)$, even if the equation is of first-order in time. Such solutions cannot be expressed in the form
\begin{equation}
\varphi(t,x)= \int_{\mathbb{R}}\varphi(k)e^{i(kx-\omega t)}\dfrac{dk}{2\pi}
\end{equation}
because the Fourier amplitudes $\varphi(k)=\int_{\mathbb{R}} \varphi(0,x)e^{-ikx}dx$ would vanish. Instead, these solutions are unbounded in space, and they exist only if we allow for complex $k$. In particular, given that $\omega=k/(1+k^2)$ has two poles, at $k=\pm i$, we can construct the integral
\begin{equation}\label{fourz}
\varphi(t,x)= \int_\Gamma e^{i(kx-\omega t)} \dfrac{dk}{2\pi} \, ,
\end{equation}
where $\Gamma$ is a closed loop around one pole, say, $i$. At $t=0$, the integrand is $e^{ikx}$, which is entire, and thus $\varphi(0,x)=0$. However, the time derivative of the integrand has a pole, so that
\begin{equation}\label{ferguz}
\partial_t \varphi(0,x) =  \int_\Gamma  \dfrac{k e^{ikx}}{1+k^2} \dfrac{dk}{2\pi i} = \dfrac{e^{-x}}{2} \neq 0 \, .
\end{equation} 
It is evident that these solutions diverge at spatial infinity, so that they can be ruled out by requiring boundedness, as done by BBM \cite{Benjamin1972}. But the ``uniqueness property'' invoked by \citet{Krotscheck1978} is a much more stringent requirement. This becomes apparent in Proposition 3 of \cite{Krotscheck1978}, where it is explicitly shown that assumption E forbids the existence of poles in $\omega(k)$, thereby ruling out the BBM equation.

\newpage

\section{Stability, causality, and localized states of the Klein-Gordon equation}

\subsection{Covariant stability}

First of all, let us show explicitly that the dispersion relations $\omega{=}\pm \sqrt{m^2{+}k^2}$ obey the stability bound $\mathfrak{Im} \, \omega \leq |\mathfrak{Im} \, k|$. To this end, let us define the four-vector $p^\mu =(\omega,k)$, so that the dispersion relations imply $p_\mu p^\mu=-m^2$, where we adopt the metric signature $(-,+,+,+)$. Splitting the real and imaginary parts, we obtain
\begin{equation}
\begin{split}
& \mathfrak{Im}p_\mu \mathfrak{Im}p^\mu =m^2+\mathfrak{Re}p_\mu \mathfrak{Re}p^\mu \, ,  \\
& \mathfrak{Im}p_\mu \mathfrak{Re}p^\mu =0 \, . \\
\end{split}
\end{equation}
Suppose that $\mathfrak{Im}p^\mu$ is timelike. Then, from the second equation, we find that $\mathfrak{Re}p^\mu$ is spacelike. But then $\mathfrak{Re}p_\mu \mathfrak{Re}p^\mu \geq 0$, and the first equation implies $\mathfrak{Im}p_\mu \mathfrak{Im}p^\mu \geq m^2$, contradicting our original assumption that $\mathfrak{Im}p^\mu$ was timelike. Therefore, $\mathfrak{Im}p^\mu$ cannot lay inside the lighcone, so that necessarily $\mathfrak{Im} \, \omega  \leq |\mathfrak{Im} \, k|$.

\subsection{Locality and causality}

The Klein-Gordon equation $(\partial_t^2-\partial^2_x+m^2)\phi=0$ is of second order in time, and it can be decomposed into a system of two equations that are of first order in time:
\begin{equation}\label{acb}
\begin{split}
\partial_t \phi ={}& \Pi \, ,\\
\partial_t \Pi ={}& (\partial^2_x-m^2)\phi \, . \\
\end{split}
\end{equation}
Hence, the degrees of freedom are $\Psi=(\phi,\Pi)^T$, and the dynamics of the system has the form $\partial_t \Psi = \mathcal{L}(\partial_x)\Psi$, with
\begin{equation}\label{gringone}
\mathcal{L}(ik)= 
\begin{bmatrix}
0 & \, \, 1 \\
-(m^2{+}k^2) & \, \, 0 \\
\end{bmatrix} \, .
\end{equation}
Clearly, $\mathcal{L}$ is a local operator, being a second-order polynomial in the derivatives. Thus, our Theorem 1 applies, and the Klein-Gordon equation is causal. The general solution to  \eqref{acb} reads
\begin{equation}\label{grump}
\begin{bmatrix}
\phi(t,x)\\
\Pi(t,x) \\
\end{bmatrix} =\int_{-\infty}^{+\infty}
\begin{bmatrix}
\cos (\omega t) & \, \, \omega^{-1} \sin(\omega t) \\
-\omega \sin(\omega t)  & \cos (\omega t)  \\
\end{bmatrix}
\begin{bmatrix}
\phi(k) \\
\Pi(k) \\
\end{bmatrix} e^{ikx} \dfrac{dk}{2\pi} \, ,
\end{equation}
with $\omega=\sqrt{m^2+k^2}$. Despite the square root, the matrix in \eqref{grump} is entire in $k$, being the exponential of \eqref{gringone}. This is reflected in the fact that the following functions are entire:
\begin{equation}
\begin{split}
\cos \sqrt{z} ={}& \sum_{n=0}^{+\infty} \dfrac{(-1)^n z^n}{(2n)!} \, , \\
\dfrac{\sin\sqrt{z}}{\sqrt{z}}={}& \sum_{n=0}^{+\infty} \dfrac{(-1)^n z^n}{(2n+1)!} \, . \\
\end{split}
\end{equation}

\subsection{Positive-frequency solutions cannot be localized}

If both degrees of freedom $\phi(0,x)$ and $\Pi(0,x)$ have support inside a compact set $\mathcal{R}$, then both $\phi(k)$ and $\Pi(k)$ are entire functions of $k \in \mathbb{C}$ \cite{Hormander_book}. On the other hand, the first component of equation \eqref{grump} can be decomposed as follows:
\begin{equation}
    \phi(t,x) = \int_{-\infty}^{+\infty}
\bigg[\bigg(\phi(k) + \dfrac{\Pi(k)}{i\omega} \bigg)e^{i\omega t}+\bigg(\phi(k) - \dfrac{\Pi(k)}{i\omega} \bigg)e^{-i\omega t} \bigg] e^{ikx} \dfrac{dk}{4\pi} \, .
\end{equation}
As we can see, the only way for $\phi(t,x)$ to be a superposition involving only the ``positive-frequency waves'', $e^{-i\omega t}$, is to impose $\Pi(k)=-i\omega \phi(k)$. But this contradicts the assumption that both $\Pi(k)$ and $\phi(k)$ are entire, since $\omega = \sqrt{m^2+k^2}$ has two branch cuts. It follows that compactly-supported solutions are necessarily superpositions of both a positive and a negative frequency part. This superposition results in a perfect cancellation of the superluminal tails of the two individual parts, and this saves causality.

\newpage


\label{lastpage}

\end{document}